\newcommand{\R}{{\mathbb{R}}}
\newcommand{\N}{{\mathbb{N}}}
\newcommand{\X}{{\mathbf{X}}}
\newcommand{\U}{{\mathbf{U}}}
\newtheorem{theorem}{Theorem}[section]
\newtheorem{assumption}{Assumption}
\newtheorem{definition}[theorem]{Definition}
\newtheorem{proof}[theorem]{Proof}
\newtheorem{remark}[theorem]{Remark}
\title{Formally Verified Neural Lyapunov Function for Incremental Input-to-State Stability of Unknown Systems
}
\author{
 Ahan Basu \\
  Centre for Cyber-Physical Systems\\
  IISc, Bengaluru, India\\
  \texttt{ahanbasu@iisc.ac.in} \\
   \And
 Bhabani Shankar Dey \\
  Centre for Cyber-Physical Systems\\
  IISc, Bengaluru, India\\
  \texttt{bhabanishan1@iisc.ac.in} \\
  \And
 Pushpak Jagtap \\
  Centre for Cyber-Physical Systems\\
  IISc, Bengaluru, India\\
  \texttt{pushpak@iisc.ac.in} \\
}
\begin{document}
\maketitle

\begin{abstract}
This work presents an approach to synthesize a Lyapunov-like function to ensure incrementally input-to-state stability ($\delta$-ISS) property for an unknown discrete-time system. To deal with challenges posed by unknown system dynamics, we parameterize the Lyapunov-like function as a neural network, which we train using the data samples collected from the unknown system along with appropriately designed loss functions. We propose a validity condition to test the obtained function and incorporate it into the training framework to ensure provable correctness at the end of the training. Finally, the usefulness of the proposed technique is proved using two case studies: a scalar non-linear dynamical system and a permanent magnet DC motor.
\end{abstract}


\section{Introduction}
Incremental stability mainly focuses on the convergence of the trajectories with respect to each other rather than the convergence to a nominal trajectory or equilibrium point. Numerous techniques have evolved to guarantee incremental stability over the years. Incremental stability has gained significant attention due to its applicability in the synchronization of cyclic feedback systems \cite{synch}, complex networks \cite{synchComplex} and interconnected oscillators \cite{synchOsci}, modelling of nonlinear analog circuits \cite{modelCirc}, and symbolic model construction for nonlinear control systems \cite{bisim1,bisim2,zamani2017towards,jagtap2020symbolic,jagtap2017quest}.

In the last few decades, researchers developed several tools to analyze incremental stability, which includes contraction analysis \cite{Contraction}, convergent dynamics \cite{Conv_dyn}, and incremental Lyapunov functions \cite{angeli2002lyapunov,DT-ISS_prop, zamani_inc}. These tools have also been extended to analyze the incremental stability of a wide class of systems, such as nonlinear systems \cite{angeli2002lyapunov}, stochastic systems \cite{biemond2018incremental}, hybrid dynamical systems \cite{biemond2018incremental},  and time-delayed systems \cite{chaillet2013razumikhin}. Apart from this, there have been several efforts in designing controllers to enforce incremental stability \cite{zamani2011backstepping,zamani2013backstepping,jagtap2017backstepping}. However, all these methods rely on the assumption of complete knowledge of the system dynamics. In contrast to this, recent work \cite{sundarsingh2024backstepping} develops an approach to estimate the unknown dynamics using Gaussian Process for a class of nonlinear systems and design a backstepping-like controller to ensure incremental stability. 

To deal with the unknown systems, learning-based approaches have attracted considerable attention. The use of neural networks to learn Lyapunov and barrier functions has become increasingly popular to ensure system stability and safety. This approach provides formal guarantees as demonstrated in previous studies \cite{Lyapunov_nn, formal_nn_Lyapunov, Neural_CBF, tayal2024learning}. Since the design of the Lyapunov function through a neural network-based approach is entirely data-driven, parameterizing this function as a neural network eliminates the need for an exact system model. This allows designers to bypass the constraints of predefining fixed templates for the Lyapunov function and instead directly synthesize it by leveraging the universal approximation property of neural networks. However, a significant challenge arises in ensuring formal guarantees because the network is trained on a discrete set of samples, representing only a subset of the actual continuous state space.

To the best of our knowledge, this is the first work to provide a formal incremental stability analysis for unknown systems. In particular, we present a novel training framework designed to synthesize verifiably correct incremental input-to-state stable Lyapunov functions, parameterized as neural networks, for unknown discrete-time systems. Notably, our approach eliminates the need for post-training formal verification. To approach this, we first establish a validity condition by formulating a Scenario Convex Problem (SCP), under the assumption that both the model dynamics and the neural network adhere to Lipschitz continuity. This ensures that the data-driven Lyapunov function obtained during training is verifiably correct. To further enhance the robustness of the training process, we impose a smaller Lipschitz constant on the neural network. This constraint reinforces the stability of the trained network, guaranteeing that the Lyapunov function remains valid across all points within the state space, thereby offering a formal correctness guarantee. We validate the effectiveness of our approach by applying it to two distinct case studies: a simple nonlinear scalar system and a permanent magnet DC motor. 
\section{Incremental Stability}

\subsection{Notations}
The symbols $\N$, $ \R$, $\R^+$, and $\R_0^+ $ denote the set of natural, real, positive real, and nonnegative real numbers, respectively. 
A vector space of real matrices with $ n $ rows and $ m $ columns is denoted by  $ \R^{n\times m} $. A column vector with $n$ rows is represented by $ \R^{n}$.
The Euclidean norm is represented using $|\cdot |$. 
Given a function $\varphi: \N \rightarrow \R^m$, its sup-norm (possibly infinite) is given by $\lVert \varphi \rVert = \{\sup|\varphi(k)| : k \in \N\}$.
For $a, b \in \mathbb{N}$ with $a \leq b$, the closed interval in $\mathbb{N}$ is denoted as $[a; b]$.  
A vector $x \in \mathbb{R}^{n}$ with entries $x_1, \ldots, x_n$ is represented as $[x_1, \ldots, x_n]^\top$, where $x_i \in \mathbb{R}$ denotes the $i$-th element of vector and $i \in [1;n]$.
A set with elements a,b,c is denoted by \{a,b,c\}. 
A diagonal matrix in $\R^{n\times n}$ with positive entries is denoted by $\mathcal{D}_{\geq 0}^n$.
Given a matrix $M\in\R^{n\times m}$, $M^\top$ represents transpose of matrix $M$. 
A continuous function $\alpha: \R_0^+ \rightarrow \R_0^+$ is said to be class $\mathcal{K}$ if $\alpha(s)>0 $ for all $s>0$, strictly increasing and $\alpha(0)=0$. It is class $\mathcal{K}_\infty$ if $\alpha(s)\rightarrow\infty$ for $s\rightarrow\infty$.
A continuous function $\beta: \R_0^+ \times \R_0^+ \rightarrow \R_0^+$ is said to be class $\mathcal{KL}$ if $\beta(s,t)$ is a class $\mathcal{K}$ function with respect to $s$ for all $t$ and for fixed $s>0$, $\beta(s,t)\rightarrow 0 $ if $t\rightarrow \infty$. A set $C\subseteq \mathbb{R}^n$ is defined to be forward invariant if for every $x\in C$, $\phi(t,x)$ will lie in the set $C$ for all $t\geq0$.

\subsection{Definitions}
In this work, we consider discrete-time systems as defined next.
\begin{definition}[Discrete-time Systems]\label{def:system}
     A discrete-time system (dt-CS) is represented by the tuple $\Xi = (\X, \U, f)$, where $\X \subseteq \R^n$ is the state-space of the system which is assumed to be forward-invariant, $\U \subseteq \R^m$ is the external input set of the system and $f: \X \times \U \rightarrow \R^n$ describes the state evolution via the following difference equation:
    \begin{equation}\label{eq:system}
        \mathsf{x}(k+1) = f(\mathsf{x}(k), \mathsf{u}(k)), \quad \forall k \in \N
    \end{equation}
    with $\mathsf{x}(k) \in \X$ and $\mathsf{u}(k) \in \U$ are the state and external input of the system at $k$-th instance, respectively.
\end{definition}

Let $\mathsf{x}_{x,\mathsf{u}}(k)$ be the state of the system \eqref{eq:system} at time instance $k$ starting from the initial state $x$ under the sequence of input $\mathsf{u}$. Next, we define the notion of incremental input-to-state ($\delta$-ISS) stability for the discrete-time system \eqref{eq:system}.

\begin{definition}[$\delta$-ISS \cite{Angeli}] \label{def:inc-stable_iss}
    The system in \eqref{eq:system} is said to be incrementally input-to-state stable ($\delta$-ISS) if there exists a class $\mathcal{KL}$ function $\beta$ and a class $\mathcal{K}_\infty$ function $\gamma$, such that for any $k \in \N$, for all $x, \hat x \in \X $ and any external input sequence $\mathsf{u},\hat{\mathsf{u}}$ the following holds:
    \begin{equation}\label{eq:gas-system}
        |\mathsf{x}_{x,\mathsf{u}}(k)-\mathsf{x}_{\hat x,\hat{\mathsf{u}}}(k)| \leq \beta(|x-\hat x|,k) + \gamma(\lVert \mathsf{u} - \hat{\mathsf{u}} \rVert).
    \end{equation}
\end{definition}
If $\mathsf{u} = \hat{\mathsf{u}}$, one can recover the notion of incremental global asymptotic stability as defined in \cite{Angeli}. Next, we define the $\delta$-ISS Lyapunov function.


\begin{definition}[$\delta$-ISS Lyapunov function \cite{DT-ISS_prop}]\label{def:ISS-Lf}
    A function $V:\R^n \times \R^n \rightarrow \R$ is said to be a $\delta$-ISS Lyapunov function for system $\Xi$ if there exist class $\mathcal{K}_\infty$ functions $\alpha_1, \alpha_2, \alpha_3$ and a class $\mathcal{K}$ function $\sigma$ such that:
    \begin{enumerate}\label{cond:ISS-Lf}
        \item[(i)] for all  $x,\hat{x}\in \X$, $\alpha_1(|x-\hat{x}|) \leq V(x,\hat{x}) \leq \alpha_2(|x-\hat{x}|),$
        \item[(ii)] for all $x,\hat{x}\in \X$ and for all $u, \hat{u} \in \U$, $V(f(x, u),f(\hat{x},\hat{u})) - V(x,\hat{x})\leq -\alpha_3(|x - \hat{x}|) + \sigma(|u-\hat{u}|).$
    \end{enumerate}
\end{definition}  

The following theorem describes $\delta$-ISS in terms of the existence of $\delta$-ISS Lyapunov function.

\begin{theorem}[\cite{DT-ISS}]\label{th:admit}
    The discrete-time control system $\Xi$ is $\delta$-ISS if it admits a $\delta$-ISS Lyapunov function.
\end{theorem}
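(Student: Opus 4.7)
The plan is to reduce the vector-valued dissipation to a scalar recursion on the Lyapunov value along the two trajectories, analyze that recursion by a case split, and then lift the resulting bound back to the trajectory gap via condition~(i).

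First I would set $e_k := \mathsf{x}_{x,\mathsf{u}}(k) - \mathsf{x}_{\hat{x},\hat{\mathsf{u}}}(k)$ and $V_k := V(\mathsf{x}_{x,\mathsf{u}}(k), \mathsf{x}_{\hat{x},\hat{\mathsf{u}}}(k))$. The upper bound in (i) gives $|e_k| \geq \alpha_2^{-1}(V_k)$, so, writing $\rho := \alpha_3 \circ \alpha_2^{-1}$ (class $\mathcal{K}_\infty$) and $\bar{\sigma} := \sigma(\lVert \mathsf{u}-\hat{\mathsf{u}}\rVert)$, condition (ii) applied along the two trajectories yields the scalar dissipation
\[V_{k+1} \leq V_k - \rho(V_k) + \bar{\sigma}, \qquad k \in \N.\]

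Next I would split cases depending on whether $\rho(V_k) \geq 2\bar{\sigma}$. In the outer regime this gives the uniform contraction $V_{k+1} \leq V_k - \tfrac{1}{2}\rho(V_k)$. In the inner regime, where $V_k \leq \rho^{-1}(2\bar{\sigma})$, the recursion yields only $V_{k+1} \leq V_k + \bar{\sigma}$. A short check shows the enlarged set $\{V \leq \rho^{-1}(2\bar{\sigma}) + \bar{\sigma}\}$ is forward invariant: either $\rho(V_k) \geq 2\bar{\sigma}$ and $V_{k+1} \leq V_k - \bar{\sigma}$, or $V_k \leq \rho^{-1}(2\bar{\sigma})$ and $V_{k+1} \leq \rho^{-1}(2\bar{\sigma}) + \bar{\sigma}$. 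A standard scalar comparison argument, comparing $V_k$ with the trajectories of the autonomous recursion $y_{k+1} = y_k - \tfrac{1}{2}\rho(y_k)$, then produces a class-$\mathcal{KL}$ function $\hat{\beta}$ and a class-$\mathcal{K}_\infty$ function $\hat{\gamma}$ (e.g.\ $\hat{\gamma}(s) := \rho^{-1}(2s) + s$) such that
\[V_k \leq \hat{\beta}(V_0, k) + \hat{\gamma}(\bar{\sigma}), \qquad k \in \N.\]

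Finally I would translate this bound on $V_k$ into one on $|e_k|$ via condition~(i). Using $|e_k| \leq \alpha_1^{-1}(V_k)$, $V_0 \leq \alpha_2(|e_0|)$, and the standard inequality $\alpha_1^{-1}(a+b) \leq \alpha_1^{-1}(2a) + \alpha_1^{-1}(2b)$, setting
\[\beta(r,k) := \alpha_1^{-1}\bigl(2\hat{\beta}(\alpha_2(r), k)\bigr), \qquad \gamma(s) := \alpha_1^{-1}\bigl(2\hat{\gamma}(\sigma(s))\bigr)\]
yields a class-$\mathcal{KL}$ function $\beta$ and a class-$\mathcal{K}_\infty$ function $\gamma$ for which \eqref{eq:gas-system} holds, proving $\delta$-ISS.

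The main obstacle is the scalar comparison in the middle step: turning the nonlinear contraction $V_{k+1} \leq V_k - \tfrac{1}{2}\rho(V_k)$ into an explicit class-$\mathcal{KL}$ decay rate that is uniform in the input mismatch $\bar{\sigma}$. This is technical because $\rho$ is only assumed class $\mathcal{K}_\infty$ (not smooth, not linearly bounded), so the construction of $\hat{\beta}$ proceeds by the standard discrete-time ISS route (comparison lemma of Jiang--Wang type). The remainder of the proof is essentially bookkeeping with compositions of class-$\mathcal{K}$ functions.
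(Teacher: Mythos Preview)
The paper does not actually prove this theorem; it is stated without proof and attributed to the reference \cite{DT-ISS}. Your argument is the standard discrete-time ISS-Lyapunov proof in the style of Jiang--Wang, which is essentially what the cited reference contains, so there is nothing to contrast. The proof sketch is correct; the only cosmetic point is that in the contraction step you implicitly need $s \mapsto s - \tfrac{1}{2}\rho(s)$ to remain nonnegative and class $\mathcal{K}$, which is handled in the Jiang--Wang comparison lemma you invoke by replacing $\rho$ with $\min(\rho,\mathrm{id})$ or similar, and that $\sigma$ is only class $\mathcal{K}$ in Definition~\ref{def:ISS-Lf}, so your $\gamma$ may need to be majorized by a class $\mathcal{K}_\infty$ function to match Definition~\ref{def:inc-stable_iss} exactly.
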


\subsection{Problem Formulation}
In this work, we consider the following assumption. 
\begin{assumption}
    The function $f$ in \eqref{eq:system} is unknown, and we have access to the black box model of the system (i.e., given finitely many samples $\{\textbf{x}_1,\ldots, \textbf{x}_N\}$, $N\in\mathbb{N}, \textbf{x}_i \in \X$ and $\{\textbf{u}_1,\ldots, \textbf{u}_N\}, \textbf{u}_i \in \U$, we can have the consecutive state $\{f(\textbf{x}_1, \textbf{u}_1),\ldots,f(\textbf{x}_N, \textbf{u}_N)\}$). 
\end{assumption}
Given the aforementioned assumption, the primary objective of this paper is to find the $\delta$-ISS Lyapunov function to verify the $\delta$-ISS property for the systems with unknown dynamics. To solve this problem, we propose the construction of a neural network-based $\delta$-ISS Lyapunov function using finitely many data points. Further, we provide the formal guarantee on the learned neural $\delta$-ISS Lyapunov function.

\section{Neural $\delta$-ISS Lyapunov Function}

In this section, we want to find a $\delta$-ISS Lyapunov function that ensures the system is incrementally input-to-state stable according to Theorem \ref{th:admit}. To do so, we first reframe the conditions (i) and (ii) of Definition \ref{def:ISS-Lf} as a robust optimization program (ROP) defined next:

\begin{subequations} \label{eq:RCP}
\begin{align}
& \min_{\eta} \quad \eta  \notag \\
& \textrm{s.t.} \notag \\
& \forall x,\hat{x} \in \X: \notag \\
& \quad -V(x,\hat{x}) + \alpha_1(|x-\hat{x}|) \leq \eta, \label{eq:geq}\\
& \quad V(x,\hat{x}) - \alpha_2(|x-\hat{x}|) \leq \eta, \label{eq:leq}\\
& \forall x,\hat{x} \in \X, \forall u,\hat{u} \in \U: \notag \\
& \quad V(f(x, u),f(\hat{x},\hat{u}))\hspace{-.2em} - \hspace{-.2em}V(x,\hat{x}) \hspace{-.2em}+\hspace{-.2em} \alpha_3(|x\hspace{-.2em} -\hspace{-.2em} \hat{x}|) - \sigma(|u \hspace{-.2em}-\hspace{-.2em} \hat{u}|) \hspace{-.2em}\leq\hspace{-.2em} \eta. \label{eq:diff}
\end{align}
\end{subequations}
If the optimal solution of the ROP $\eta^* \leq 0$, then satisfying equation \eqref{eq:RCP} will imply satisfying conditions of Definition \ref{def:ISS-Lf}, and thus the corresponding Lyapunov function will be a valid $\delta$-ISS Lyapunov function for the unknown system, guaranteeing the system to be incrementally ISS. 

However, there are several challenges in solving the ROP. Since the function $f$ is unknown, incorporating condition \eqref{eq:diff} directly is nontrivial. So, finding the Lyapunov function is a complex task. Additionally, fixing the template of the $\delta$-ISS Lyapunov function may be restrictive; hence, finding suitable Lyapunov functions is not always possible. Therefore, to circumvent the issue, we parametrize the $\delta$-ISS Lyapunov function as a neural network, $V_{\theta,b}$, with $\theta$ (the weight parameters) and $b$ (the bias parameters).

For the training of the neural $\delta$-ISS Lyapunov function, one requires data from state space and input-space. Here, we leverage a sampling-based approach to get samples from the state-space $\X$ as well as the input space $\U$, and convert the ROP in \eqref{eq:RCP} into a scenario convex program (SCP) \cite{formal_nn_Lyapunov, Neural_CBF}. We collect samples $x_s$ and $u_p$ from $\X$ and $\U$, where $s \in [1;N], p \in [1;M]$. Consider ball $\X_s$ and $\U_p$ around each sample $x_s$ and $u_p$ with radius $\varepsilon_x$ and $\varepsilon_u$, such that, $\X \subseteq \bigcup_{s=1}^{N} \X_s$ and $\U \subseteq \bigcup_{p=1}^{M} \U_s$ with:
\begin{subequations}\label{eq:ball}
     \begin{align}
        |x - x_s| \leq \varepsilon_x , \forall x \in \X_s, \\
        |u - u_p| \leq \varepsilon_u , \forall u \in \U_p.
    \end{align}
\end{subequations}  
We consider $\varepsilon = \max(\varepsilon_x, \varepsilon_u)$.  The data sets obtained upon sampling the state-space $\X$ and input space $\U$ are denoted by:
\begin{subequations}\label{set:SCP}
    \begin{align}
    \mathcal{X} &= \{x_s | x_s \in \X, \forall s \in [1;N]\}, \\
    \mathcal{U} &= \{u_p | u_p \in \U, \forall p \in [1;M]\}.
    \end{align}
\end{subequations}

We consider the class $\mathcal{K}_\infty$ functions $\alpha_i, i \in \{1,2,3\}$ are of degree $\gamma_i$ with respect to $|x - \hat{x}|$ and and class $\mathcal{K}$ function $\sigma$ is of degree $\gamma_u$ with respect to $|u - \hat{u}|$ i.e., $\alpha_i(|x - \hat{x}|)=k_i|x - \hat{x}|^{\gamma_i}$, $i\in\{1, 2, 3\}$ and $\sigma(|u - \hat{u}|) = k_u|u - \hat{u}|^{\gamma_u}$, where $k_i, \gamma_i, i \in \{1,2,3,u\}$ are user-specific.

Utilizing the data sets and following the consideration above, we construct a scenario convex problem (SCP), defined below:
\begin{subequations} \label{eq:SCP}
\begin{align}
& \min_{\eta} \quad \eta  \notag \\
& \textrm{s.t.} \notag \\
& \forall x_q,x_r \in \mathcal{X}: \notag \\
& \quad -V_{\theta,b}(x_q,x_r) + k_1|x_q-x_r|^{\gamma_1} \leq \eta, \\
& \quad V_{\theta,b}(x_q,x_r) - k_2|x_q-x_r|^{\gamma_2} \leq \eta, \\
& \forall x_q,x_r \in \mathcal{X}, \forall u_q, u_r \in \mathcal{U}: \notag \\
& \quad V_{\theta,b}(f(x_q, u_q),f(x_r, u_r)) - V_{\theta,b}(x_q,x_r) + k_3 |x_q - x_r|^{\gamma_3} - k_u|u_q - u_r|^{\gamma_u}\leq \eta. 
\end{align}
\end{subequations}
Since we have finite data samples, the number of equations in the SCP is finite, hence the solution of the SCP is tractable. Let the optimal solution of the SCP be $\eta_S^*$. Now to prove the solution of SCP is also a feasible solution to the proposed RCP, we make the following assumptions on Lipschitz continuity:

\begin{assumption}\label{assum:Lipschitz_fun}
    The function $f$ in \eqref{eq:system} is Lipschitz continuous with respect to $x$ and $u$ over the state space $\X$ and input space $\U$ with Lipschitz constant $\mathcal{L}_x$ and $\mathcal{L}_u$. One can estimate these constants following the similar procedure in \cite[Algorithm 2]{FV_DD}. 
\end{assumption}

\begin{assumption}\label{assum:Lipschitz_net}
    Moreover, we assume the candidate neural Lyapunov function to be Lipschitz continuous with Lipschitz constant $\mathcal{L}_L$ with respect to $(x,\hat{x})$ over the set $\X$. In the next section, we will explain how $\mathcal{L}_L$ is to be ensured in the training procedure.
\end{assumption}

\begin{remark}
    The class $\mathcal{K}_\infty$ functions and class $\mathcal{K}$ function are Lipschitz continuous with Lipschitz constants $\mathsf{L}_1, \mathsf{L}_2, \mathsf{L}_3$ and $\mathsf{L}_u$ respectively with respect to $|x-\hat{x}|$ and $|u-\hat{u}|$. One can estimate the values using the values of $k_i$ and $\gamma_i, i \in \{1,2,3,u\}$.
\end{remark}

Under Assumption \ref{assum:Lipschitz_fun} and \ref{assum:Lipschitz_net}, Theorem \ref{th:constr} outlines the connection of the solution of SCP \eqref{eq:SCP} to that of RCP \eqref{eq:RCP}, providing formal guarantee to the obtained Lyapunov function satisfying the incremental stability conditions.
\begin{theorem}\label{th:constr}
    Consider a dt-CS represented as $\Xi$. Let $V_{\theta,b}$ be the neural $\delta$-ISS Lyapunov function. The optimal value of the SCP in \eqref{eq:SCP}, $\eta_S^*$, is obtained using the sampled data. Then under Assumption \ref{assum:Lipschitz_fun} and \ref{assum:Lipschitz_net}, if
    \begin{equation}\label{eq:cond}
        \eta_S^* + \mathcal{L}\varepsilon \leq 0, 
    \end{equation}
    where $\mathcal{L} = \max\{\sqrt{2}\mathcal{L}_L + 2\mathsf{L}_1, \sqrt{2}\mathcal{L}_L + 2\mathsf{L}_2, \sqrt{2}\mathcal{L}_L(\mathcal{L}_x + \mathcal{L}_u + 1) + 2(\mathsf{L}_3 + \mathsf{L}_u)\}$, then the $\delta$-ISS Lyapunov function obtained by solving the SCP ensures the system to be incrementally input-to-state stable.
\end{theorem}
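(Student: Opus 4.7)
The plan is to leverage the Lipschitz continuity assumptions to lift each sampled constraint in the SCP \eqref{eq:SCP} to its robust counterpart in the RCP \eqref{eq:RCP}. Once every RCP constraint is shown to hold with $\eta \leq 0$, the two conditions of Definition \ref{def:ISS-Lf} are met on all of $\X \times \X$ and $\U \times \U$, and Theorem \ref{th:admit} delivers the $\delta$-ISS property.

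First, I would invoke the covering property \eqref{eq:ball}: for any $(x,\hat{x}) \in \X \times \X$ and any $(u,\hat{u}) \in \U \times \U$ there exist sampled points $x_q, x_r \in \mathcal{X}$ and $u_q, u_r \in \mathcal{U}$ with $\max\{|x-x_q|, |\hat{x}-x_r|\} \leq \varepsilon_x$ and $\max\{|u-u_q|, |\hat{u}-u_r|\} \leq \varepsilon_u$. I would then record two consequences I will use repeatedly: the joint-norm bound $|(x,\hat{x})-(x_q,x_r)| \leq \sqrt{2}\,\varepsilon_x \leq \sqrt{2}\,\varepsilon$, and the reverse triangle inequality $\bigl||x-\hat{x}| - |x_q-x_r|\bigr| \leq |x-x_q| + |\hat{x}-x_r| \leq 2\varepsilon$.

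The core of the argument is a constraint-by-constraint comparison. For \eqref{eq:geq}, applying Lipschitz continuity of $V_{\theta,b}$ with constant $\mathcal{L}_L$ on the joint argument, together with Lipschitz continuity of $\alpha_1$ with constant $\mathsf{L}_1$, yields a slack of at most $(\sqrt{2}\mathcal{L}_L + 2\mathsf{L}_1)\varepsilon$ between the RCP evaluation at $(x,\hat{x})$ and the SCP evaluation at $(x_q,x_r)$. An entirely analogous estimate for \eqref{eq:leq} gives $(\sqrt{2}\mathcal{L}_L + 2\mathsf{L}_2)\varepsilon$. The most delicate case is \eqref{eq:diff}: here I would bound the composite term $V_{\theta,b}(f(x,u),f(\hat{x},\hat{u}))$ by first estimating $|(f(x,u),f(\hat{x},\hat{u})) - (f(x_q,u_q),f(x_r,u_r))|$ via the Lipschitz constants of $f$, giving $\sqrt{2}(\mathcal{L}_x+\mathcal{L}_u)\varepsilon$, and then multiplying by $\mathcal{L}_L$. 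Adding the $\sqrt{2}\mathcal{L}_L\varepsilon$ contribution from $V_{\theta,b}(x,\hat{x})$ and the $2(\mathsf{L}_3 + \mathsf{L}_u)\varepsilon$ contribution from $\alpha_3$ and $\sigma$ produces exactly the third entry $\sqrt{2}\mathcal{L}_L(\mathcal{L}_x + \mathcal{L}_u + 1) + 2(\mathsf{L}_3 + \mathsf{L}_u)$ in the definition of $\mathcal{L}$.

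Since the SCP constraints at the sampled points are bounded by $\eta_S^*$, the arguments above show that every RCP constraint at an arbitrary $(x,\hat{x},u,\hat{u})$ is bounded by $\eta_S^* + \mathcal{L}\varepsilon$, which is nonpositive by hypothesis \eqref{eq:cond}. Consequently $V_{\theta,b}$ satisfies Definition \ref{def:ISS-Lf} on the entire state and input spaces, and Theorem \ref{th:admit} closes the argument. The main obstacle I anticipate is the careful bookkeeping for the composite term $V_{\theta,b}\circ f$ in \eqref{eq:diff}: one must simultaneously invoke Lipschitz continuity of $V_{\theta,b}$ on the paired post-dynamics state and control each coordinate through the Lipschitz constants of $f$ in both $x$ and $u$, and it is precisely this composition that generates the $\mathcal{L}_x + \mathcal{L}_u + 1$ factor appearing in $\mathcal{L}$.
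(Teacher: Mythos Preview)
Your proposal is correct and follows essentially the same route as the paper: use the $\varepsilon$-covering of $\X$ and $\U$ together with the Lipschitz constants of $V_{\theta,b}$, $f$, and the comparison functions to bound each RCP constraint at an arbitrary $(x,\hat{x},u,\hat{u})$ by the corresponding SCP constraint at sampled points plus $\mathcal{L}\varepsilon$, then invoke \eqref{eq:cond} and Theorem~\ref{th:admit}. The decomposition you outline for the three constraints, including the handling of the composite term $V_{\theta,b}\circ f$ that produces the $\mathcal{L}_x+\mathcal{L}_u+1$ factor, matches the paper's argument line by line.
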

\begin{proof}
    Under the condition \eqref{eq:cond}, we demonstrate that the obtained Lyapunov function from SCP satisfies the conditions of Definition \ref{def:ISS-Lf}. The optimal $\eta_S^*$, obtained through solving the \eqref{eq:SCP}, guarantees for any $x_q,x_r \in \mathcal{X}, u_q,u_r \in \mathcal{U}$, we have: 
    \begin{align*}
    & -V_{\theta,b}(x_q,x_r) + k_1|x_q-x_r|^{\gamma_1} \leq \eta_S^*,\\
    & V_{\theta,b}(x_q,x_r) - k_2|x_q-x_r|^{\gamma_2} \leq \eta_S^*, \\ 
    & V_{\theta,b}(f(x_q, u_q),f(x_r, u_r)) - V_{\theta,b}(x_q,x_r) + k_3 |x_q - x_r|^{\gamma_3} - k_u|u_q - u_r|^{\gamma_u}\leq \eta_S^*.
    \end{align*}
Now from \eqref{eq:ball}, we infer that $\forall x \in \X, \text{there exists}$ $ x_r $ s.t. $|x-x_r| \leq \varepsilon$ as well as $\forall u \in \U, \text{there exists}$ $ u_r $ s.t. $|u-u_r| \leq \varepsilon$. Hence, $\forall x,\hat{x} \in \X, \forall u, \hat{u} \in \U$:
    \begin{align*}
        \text{(a)} & -V_{\theta,b}(x,\hat{x}) + k_1|x-\hat{x}|^{\gamma_1} \\
        & = \big(-V_{\theta,b}(x,\hat{x}) + V_{\theta,b}(x_q,x_r)\big) + \big( - V_{\theta,b}(x_q,x_r) + k_1|x_q-x_r|^{\gamma_1}\big) + \big(- k_1|x_q-x_r|^{\gamma_1} + k_1|x-\hat{x}|^{\gamma_1}\big) \\
        & \leq \mathcal{L}_L |(x,\hat{x}) - (x_q,x_r)| + \eta_S^* + 2\mathsf{L}_1\varepsilon\\
        & \leq \sqrt{2}\mathcal{L}_L\varepsilon + \eta_S^* + 2\mathsf{L}_1\varepsilon \leq \mathcal{L}\varepsilon + \eta_S^* \leq 0. \\
        \text{(b)} & V_{\theta,b}(x,\hat{x}) - k_2|x-\hat{x}|^{\gamma_2} \\
        & = \big(V_{\theta,b}(x,\hat{x}) - V_{\theta,b}(x_q,x_r)\big) + \big(  V_{\theta,b}(x_q,x_r) - k_2|x_q-x_r|^{\gamma_2}\big) + \big(k_2|x_q-x_r|^{\gamma_2} - k_2|x-\hat{x}|^{\gamma_2}\big) \\
        & \leq \mathcal{L}_L |(x,\hat{x}) - (x_q,x_r)| + \eta_S^* + 2\mathsf{L}_2\varepsilon\\
        & \leq \sqrt{2}\mathcal{L}_L\varepsilon + \eta_S^* + 2\mathsf{L}_2\varepsilon \leq \mathcal{L}\varepsilon + \eta_S^* \leq 0. \\
        \text{(c)}& V_{\theta,b}(f(x, u),f(\hat{x},\hat{u})) - V_{\theta,b}(x,\hat{x}) + k_3|x - \hat{x}|^{\gamma_3} - k_u|u - \hat{u}|^{\gamma_u}\\
         & = \big(V_{\theta,b}(f(x, u),f(\hat{x},\hat{u})) - V_{\theta,b}(f(x_q, u_q),f(x_r,u_r))\big) + \big(V_{\theta,b}(f(x_q, u_q),f(x_r,u_r)) - V_{\theta,b}(x_q,x_r) \\ 
         & \quad + k_3 |x_q - x_r|^{\gamma_3}- k_u|u_q - u_r|^{\gamma_u}\big) + \big(V_{\theta,b}(x_q,x_r) - V_{\theta,b}(x,\hat{x})\big) + \big(k_3(|x - \hat{x}|^{\gamma_3} - |x_q - x_r|^{\gamma_3})\big)\\
         & \quad + \big(k_u(|u_q - u_r|^{\gamma_u} - |u - \hat{u}|^{\gamma_u})\big)\\
         & \leq \mathcal{L}_L|(f(x, u),f(\hat{x},\hat{u})) - (f(x_q, u_q),f(x_r,u_r))| + \eta_S^* + \sqrt{2}\mathcal{L}_L\varepsilon + 2(\mathsf{L}_3+\mathsf{L}_u)\varepsilon \\ 
         & \leq \sqrt{2}\mathcal{L}_L(\mathcal{L}_x + \mathcal{L}_u) \varepsilon + \eta_S^* + \sqrt{2}\mathcal{L}_L\varepsilon + 2(\mathsf{L}_3+\mathsf{L}_u)\varepsilon \\
         & \leq (\sqrt{2}\mathcal{L}_L(\mathcal{L}_x + \mathcal{L}_u + 1) + 2(\mathsf{L}_3+\mathsf{L}_u)) \varepsilon + \eta_S^* \\
         & \leq \mathcal{L}\varepsilon + \eta_S^* \leq 0.
     \end{align*}
    Therefore, if the condition \eqref{eq:cond} is satisfied, the neural Lyapunov function will imply incrementally input-to-state stability of the system. This completes the proof.
\end{proof}


\section{Training Provably Correct Neural $\delta$-ISS Lyapunov Function }
In this section, we utilize the derived validity condition \eqref{eq:cond} from Section III and propose a training framework to synthesize provably correct $\delta$-ISS Lyapunov function parametrized as neural networks. We train the neural network to achieve a formal guarantee of its validity by constructing suitable loss functions that incorporate the satisfaction of conditions \eqref{eq:geq}-\eqref{eq:diff} and \eqref{eq:cond}.
\subsection{Neural Network Structure}
Given a dt-CS $\Xi$, we define the neural $\delta$-ISS Lyapunov function as $V_{\theta,b}$, where the neural network is parametrized by the weights $\theta$ and bias $b$. It consists of an input layer with $2n$ ($2$ times the system dimension) neurons and an output layer with one neuron due to the scalar value of the Lyapunov function. The number of hidden layers is denoted by $l_f$, and each hidden layer has $h_f^i$, where $i \in [1;l_f]$ neurons, where both $l_f$ and $h_f^i$ are arbitrarily chosen.

The activation function of all the layers except the output layer is chosen to be any slope-restricted function $\phi(\cdot)$ (for example, ReLU, Sigmoid, Tanh etc.). Hence the resulting neural network function is obtained by applying the activation function recursively and is denoted by:
\[
\begin{cases}
x^0 = [x^\top,\hat{x}^\top]^\top , x,\hat{x} \in \R^n, \\
x^{i+1} = \phi(\theta^ix^i + b^i) \hspace{0.2 em} \text{for} \hspace{0.2 em} i \in [0;l_f-1], \\
V_{\theta,b}(x,\hat{x}) = \theta^{l_f}x^{l_f} + b^{l_f}. 
\end{cases}
\]
Note that $\phi(\cdot)$ is applied element-wise. 
\subsection{Training with Formal Guarantees}
This section discusses the procedure of training the neural $\delta$-ISS Lyapunov function $V_{\theta,b}$ for the dt-CS $\Xi$ ensuring the satisfaction of the validity condition \eqref{eq:cond} such that the trained Lyapunov function satisfies the conditions of Definition \ref{def:ISS-Lf} over the state space $\X$ and input set $\U$. We need to assume the Lipschitz continuity of the $\delta$-ISS candidate Lyapunov function as per Assumption \ref{assum:Lipschitz_net}. Since the neural network consists of slope-restricted activation layers, the Lipschitzness assumption is valid \cite{Neural_CBF}. Based on the assumption, let us consider $V_{\theta,b}$ is Lipschitz continuous with a bound $\mathcal{L}_L$, and hence:
\begin{align}
    \forall w,x,y,z \in \R^n:
    |V_{\theta,b}(w,x) - V_{\theta,b}(y,z)| \leq \mathcal{L}_L|(w,x) - (y,z)|.
\end{align}
Then, one can guarantee the Lipschitz constant of $V_{\theta,b}$ is bounded by $\mathcal{L}_L$ if the following inequality \cite{pauli2022a} holds:
\begin{equation}\label{eq:mat_ineq}
    \underbrace{\begin{bmatrix}
    \mathcal{L}_L^2I_{2n} & -\theta^{0^T}\Lambda_1 & 0 & \ldots & 0 \\
    -\Lambda_1\theta^0 & 2\Lambda_1 & \ddots & \ddots & \vdots \\
    0 & \ddots & \ddots & -\theta^{{l_f-1}^T}\Lambda_{l_f} & 0 \\
    \vdots & \ddots & -\Lambda_{l_f}\theta^{l_f-1} & 2\Lambda_{l_f} & -\theta^{{l_f}^T} \\
    0 & \ldots & 0 & -\theta^{l_f} & I_{h_l^{l_f}}
    \end{bmatrix}}_{:=\mathcal{P}(\theta,\Lambda)} \geq 0.
\end{equation}

Here, $\Lambda = (\Lambda_1,\ldots,\Lambda_{l_f}), \Lambda_i \in \mathcal{D}_{\geq 0}^{h_b^i}, i \in [1;l_f]$. The matrix inequality $\mathcal{P}(\theta,\Lambda)\geq 0$ can be obtained using a similar method as in \cite{pauli2022b}.

Now we describe the formulation of suitable loss functions for the training of $V_{\theta,b}$ such that minimization of the loss function leads to the satisfaction of the conditions in SCP \eqref{eq:SCP} over the training data set \eqref{set:SCP} and satisfying the validity condition \eqref{eq:cond}. We consider \eqref{eq:geq}-\eqref{eq:diff} as sub-loss functions to construct the actual loss function, which can be renamed as the Lyapunov risk function. The sub-loss functions are:
\begin{subequations}\label{eq:loss_LR}
    \begin{align}
        L_0(\psi,\eta) &= \sum_{x,\hat{x} \in \mathcal{X}}\max\big(0,(-V_{\theta,b}(x,\hat{x}) + k_1|x- \hat{x}|^{\gamma_1} - \eta)\big), \\
        L_1(\psi,\eta) &= \sum_{x,\hat{x} \in \mathcal{X}}\max\big(0,(V_{\theta,b}(x,\hat{x}) - k_2|x- \hat{x}|^{\gamma_2} - \eta)\big), \\
        L_2(\psi,\eta) &= \sum_{x,\hat{x} \in \mathcal{X}, u, \hat{u} \in \mathcal{U}}\max\big(0,(V_{\theta,b}(f(x, u),f(\hat{x},\hat{u})) - V_{\theta,b}(x,\hat{x}) + k_3|x - \hat{x}|^{\gamma_3} - k_u|u - \hat{u}|^{\gamma_u} - \eta)\big),
    \end{align}
\end{subequations}
where, $\psi = [\theta,b]$ and $\eta$ are trainable parameters. Finally, the Lyapunov risk is a weighted sum of the sub-loss functions and is denoted by:
\begin{equation}
    L(\psi, \eta) = c_0L_0(\psi, \eta) + c_1L_1(\psi, \eta) + c_2L_2(\psi, \eta)
\end{equation}
where $c_0,c_1, c_2 \in \R^+$ are the weight of the sub-loss functions $L_0(\psi, \eta),L_1(\psi, \eta),L_2(\psi, \eta)$ respectively. Now for the satisfaction of conditions \eqref{eq:cond} and \eqref{eq:mat_ineq}, we consider two more loss functions denoted by:
\begin{align}
    L_{\mathcal{P}}(\psi,\Lambda) &= -c_l\log\det(\mathcal{P}(\theta,\Lambda)), \label{eq:loss_ineq}\\
    L_v(\eta) &= \max \big(0,(\mathcal{L}\varepsilon + \eta)\big), \label{eq:loss_th}
\end{align}
where $\psi,\Lambda$ are trainable parameters and $\mathcal{L}_L$ that appears in $\mathcal{P}(\theta,\Lambda)$ is used to compute $\mathcal{L}$ as described in Theorem \ref{th:constr}. The Lipschitz bound along with the $\varepsilon$ are hyper-parameters that are chosen apriori. Addtionally, $-c_l \in \R^+$. 

Now under the trained neural network corresponding to the Lyapunov-like function for the dt-CS $\Xi$, the following theorem provides a formal guarantee for the system to be incrementally input-to-state stable.
\begin{theorem}\label{th:guarantee}
    Consider a dt-CS in $\Xi$ with state-space $\X$ and input space $\U$. Let, $V_{\theta,b}$ denotes the trained neural network representing $\delta$-ISS Lyapunov function such that $L(x,\hat{x}), L_v(\eta) = 0$ and $L_{\mathcal{P}}(\psi,\Lambda) \leq 0$ over the training data set $\mathcal{X}$ and $\mathcal{U}$ as mentioned in \eqref{set:SCP}. Then, the system $\Xi$ is guaranteed to be incrementally ISS within the state space under the influence of any elements in the input space.
\end{theorem}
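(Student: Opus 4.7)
The plan is to show that the three loss conditions in the hypothesis collectively imply the three ingredients required to invoke Theorem \ref{th:constr}, namely: (a) the neural Lyapunov candidate $V_{\theta,b}$ is globally Lipschitz with the constant $\mathcal{L}_L$ used to form $\mathcal{L}$ (so Assumption \ref{assum:Lipschitz_net} holds with that specific value), (b) the SCP \eqref{eq:SCP} admits a feasible value, and in particular $\eta_S^* \leq \eta$ for the trained scalar $\eta$, and (c) the validity condition \eqref{eq:cond}, namely $\eta_S^* + \mathcal{L}\varepsilon \leq 0$, is met. Once these are in place, Theorem \ref{th:constr} certifies that $V_{\theta,b}$ satisfies conditions (i) and (ii) of Definition \ref{def:ISS-Lf} over all of $\X$ and $\U$, and then Theorem \ref{th:admit} delivers incremental input-to-state stability of $\Xi$.

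First I would dispatch (a) from $L_{\mathcal{P}}(\psi,\Lambda) \leq 0$. The log-determinant term in \eqref{eq:loss_ineq} acts as a barrier: it is only finite when $\mathcal{P}(\theta,\Lambda)$ is positive definite, and the stated bound on $L_{\mathcal{P}}$ together with the sign convention on $c_l$ translates, at the trained parameters, into the LMI \eqref{eq:mat_ineq}. By the construction in \cite{pauli2022a,pauli2022b}, satisfaction of this LMI certifies that the parameterized map $V_{\theta,b}$ is Lipschitz continuous over $\R^n \times \R^n$ with constant $\mathcal{L}_L$, which is precisely the $\mathcal{L}_L$ plugged into the definition of $\mathcal{L}$ in Theorem \ref{th:constr}. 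This secures Assumption \ref{assum:Lipschitz_net}.

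Next I would treat (b) and (c) in one stroke. From $L(\psi,\eta)=0$ and the strict positivity of the weights $c_0,c_1,c_2$, each sub-loss $L_0,L_1,L_2$ in \eqref{eq:loss_LR} must vanish. Because every summand is a $\max(0,\cdot)$ term, every SCP constraint in \eqref{eq:SCP} is forced to hold with the trained scalar $\eta$ replacing the optimization variable. Hence $\eta$ is a feasible value of the SCP, and so the optimal value satisfies $\eta_S^* \leq \eta$. Meanwhile, $L_v(\eta)=0$ in \eqref{eq:loss_th} rewrites as $\mathcal{L}\varepsilon + \eta \leq 0$. Chaining the two inequalities yields $\eta_S^* + \mathcal{L}\varepsilon \leq 0$, i.e.\ \eqref{eq:cond}.

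With (a)--(c) established, Theorem \ref{th:constr} directly concludes that $V_{\theta,b}$ is a valid $\delta$-ISS Lyapunov function on the full continuous state and input spaces, and Theorem \ref{th:admit} then delivers $\delta$-ISS of $\Xi$. I do not expect a genuine obstacle in this argument; the only subtlety is bookkeeping, namely that the Lipschitz constant certified by the LMI is the same $\mathcal{L}_L$ used in computing $\mathcal{L}$, and that the trained $\eta$ at which the sub-losses vanish is the same $\eta$ appearing in $L_v$. Both identifications are built into the loss design, so the proof is essentially a one-line reduction to Theorem \ref{th:constr}.
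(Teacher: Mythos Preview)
The paper states Theorem \ref{th:guarantee} without proof; the result is treated as an immediate corollary of the preceding development. Your argument is correct and is exactly the intended reduction: $L_{\mathcal{P}}\le 0$ (together with the initial-feasibility remark after the theorem) places the trained weights in the LMI-feasible cone so that \eqref{eq:mat_ineq} certifies the Lipschitz bound $\mathcal{L}_L$; $L=0$ forces feasibility of the SCP at the trained $\eta$, whence $\eta_S^*\le\eta$; and $L_v=0$ gives $\eta+\mathcal{L}\varepsilon\le 0$, so \eqref{eq:cond} holds and Theorem \ref{th:constr} applies. Note that Theorem \ref{th:constr} already concludes $\delta$-ISS in its statement, so your separate invocation of Theorem \ref{th:admit} is redundant but harmless.
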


The training process of the neural Lyapunov function is described below:
\begin{enumerate}
    \item \textbf{Require:} Set of sampled data point $\mathcal{X}$ and $\mathcal{U}$, and the black box model of the system representing $f(x,u)$.
    \item Fix all the hyper parameters $c = [c_0, c_1, c_2, c_l], k = [k_1, k_2, k_3, k_u], \gamma = [\gamma_1, \gamma_2, \gamma_3, \gamma_u], \mathcal{L}_L$ apriori. Fix the number of epochs $n_{ep}$ apriori as well.
    \item Estimate the Lipschitz constant $\mathcal{L}_x$ and $\mathcal{L}_u$ using Reverse Weibull distribution \cite{Lipschitz}. Get the Lipschitz constant $\mathcal{L}$ as defined in Theorem \ref{th:constr}.
    \item Randomly generate several batches $\X_k, k = [1;n_b]$ from the dataset. Set the number of batches $n_b$ apriori.
    \item In every epoch, pass each batch to the network and get the output $V_{\theta, b}(x,\hat{x})$ for that batch. Pass the same batch of data into the black box model of $f$ and obtain the batch of next states $\X_k^+, k = [1;n_b]$. Pass the next state batch to the network again to get $V_{\theta, b}(f(x, u),f(\hat{x}, \hat{u}))$. Calculate the loss for each batch using the loss functions mentioned in \eqref{eq:loss_LR}, \eqref{eq:loss_ineq} and \eqref{eq:loss_th}. The cumulative sum of batch losses will give the epoch loss.
    \item Utilizing Adam or Stochastic Gradient Descent (SGD) optimization techniques with a specified learning rate \cite{ruder2016}, reduce the loss function and update the trainable parameters $\phi,\eta$. The learning rate can be different for different parameters.
    \item Repeat steps 6 and 7 until the loss functions converge according to Theorem \ref{th:guarantee}. After successful convergence of the training, the neural network will act as $\delta$-ISS-Lyapunov function $V_{\theta,b}$, and the system is considered to be incrementally input-to-state stable under the influence of the bounded input state $\U$. 
\end{enumerate}
\begin{remark}
    If the algorithm does not converge successfully, one can not comment on the incremental input-to-state stability of the dt-CS $\Xi$ with the specified hyperparameters $c,k,\gamma, \mathcal{L}_L$. 
\end{remark}

\begin{remark}
    Also, the initial feasibility of condition \eqref{eq:mat_ineq} is required to satisfy the criterion of loss $L_\mathcal{P}$ in \eqref{eq:loss_ineq} according to Theorem \ref{th:guarantee}. Choosing small initial weights and bias of the neurons can ensure this condition \cite{pauli2022a}.
\end{remark}


\section{Case Study}

The proposed verification procedure of an Incremental Stable system using a neural network-based $\delta$-ISS Lyapunov function is shown through the following two case studies: $(i)$ a nonlinear numerical example and (ii) a permanent magnet DC motor.

All the computations were performed using PyTorch in Python 3.8.10 on a machine with a Linux Ubuntu operating system with Intel i7-7700 CPU and 32GB RAM.

\subsection{Numerical Example}
We consider a simple non-linear one-dimensional system, whose discrete-time dynamics is given by:
\begin{align}
    \mathsf{x}(k+1) = \mathsf{x}(k) + \tau(a \sqrt{\mathsf{x}(k)} + \mathsf{u}(k)),
    \label{eq:case1}
\end{align}
where $\mathsf{x}(k)$ denotes the state of the system at $k$-th instant. The constant $a = -1$ stands for the decay rate constant of the system. $\tau = 0.01$ is the sampling time. We consider the state space of the system to be $\X = [0, 0.5]$. 
Also, we consider the model to be unknown. However, we estimate the Lipschitz constant $\mathcal{L}_x = 1, \mathcal{L}_u = 0.01$ using the results in \cite{Lipschitz}. 

The goal is to synthesize a valid Lyapunov-like function $V_{\theta,b}$ that will ensure the system to be incrementally input-to-state stable under the influence of a bounded input space $\U = [0, 0.5]$. We first fix the training hyper-parameters as $\varepsilon = 0.000177, \mathcal{L}_L = 1.5, k_1 = 0.00001, k_2 = 1, k_3 = k_u = 0.0001$. So, the Lipschitz constant $\mathcal{L} = 4.264$ using Theorem \ref{th:constr}. We fix the structure of $V_{\theta,b}$ as $l_f = 1, h_f^1 = 20$. 

Now we consider the training data obtained from \eqref{set:SCP} and perform training to minimize the loss functions $L, L_\mathcal{P} $ and $ L_v$. The training algorithm converges to obtain the $\delta$-ISS Lyapunov function $V_{\theta,b}$ along with $\eta = -0.0008$. Hence, $\eta+\mathcal{L}\varepsilon = -0.0008 + 4.264\times0.000176 = -0.00005$, thus by utilizing Theorem \ref{th:guarantee}, we can guarantee the obtained $\delta$-ISS Lyapunov function $V_{\theta,b}$ is valid and the system is guaranteed to be incrementally input-to-state stable.

The successful runs of the algorithm have an average convergence time of 10 minutes, and training data generation takes an additional time of 1 second.

One can see from Figure \ref{fig:sim1}, that the trajectories starting from different initial conditions are convergent towards each other and the difference between the trajectories at steady state is caused due to the difference in inputs. Hence, the system is said to be incrementally ISS.
\begin{figure}[h]
    \centering
    \includegraphics[width=\linewidth]{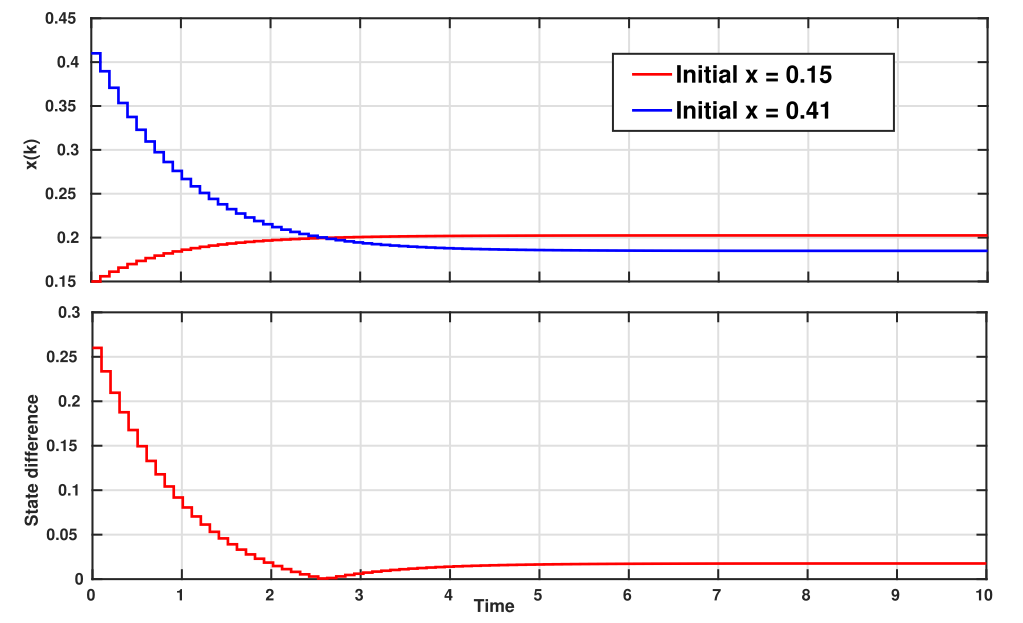}
    \caption{Top: State response of a simple nonlinear system as in \eqref{eq:case1}, where the blue curve is influenced under input $u$ = 0.43, and the red curve is influenced under input $u$ = 0.45. Bottom: Difference between trajectories subjected to different initial conditions and inputs.}
    \label{fig:sim1}
\end{figure}

\subsection{Permanent Magnet DC Motor}
We consider a permanent magnet DC motor adopted from \cite{DC_motor}, whose discrete-time dynamics is given by:
\begin{align*}
    \mathsf{x}_1(k+1) &= \mathsf{x}_1(k) + \tau \Big(-\frac{R_a}{L_a}\mathsf{x}_1(k) - \frac{k_b}{L_a}\mathsf{x}_2(k) + \frac{V_{in}}{L_a}\Big) \\
    \mathsf{x}_2(k+1) &= \mathsf{x}_2(k) + \tau \Big(\frac{k_b}{J}\mathsf{x}_1(k) - \frac{B}{J}\mathsf{x}_2(k)\Big),
\end{align*}
where $\mathsf{x}_1(k),\mathsf{x}_2(k)$ denote the armature current and the rotational speed of the shaft of the motor, respectively. The constants $R_a = 1, L_a = 0.01, J = 0.01, b = 1, k_b = 0.01$ stand for the armature resistance, inductance, moment of inertia of the rotor, motor torque, and back electromotive force of the motor respectively. The sampling time $\tau$ is considered to be $0.001$. The input voltage $V_{in}$ is fixed at $0.17$ volt. We consider the state space of the system to be $\X = [0, 0.2]\times[0, 0.2]$. Also, we consider the model to be unknown. However, we estimate the Lipschitz constants $\mathcal{L}_x = 0.048$ using the results in \cite{Lipschitz}. 

Here, we first fix the training hyper-parameters as $\varepsilon = 0.004, \mathcal{L}_L = 1, k_1 = 0.00001, k_2 = 0.04, k_3 = k_u = 0.0001$. So, the Lipschitz constant $\mathcal{L} = 1.4962$. We fix the structure of $V_{\theta,b}$ as $l_f = 1, h_f^1 = 20$. The training algorithm converges to obtain the $\delta$-ISS Lyapunov function $V_{\theta,b}$ along with $\eta = -0.01$. Exploiting theorem \ref{th:guarantee}, we can guarantee the obtained $\delta$-ISS Lyapunov function $V_{\theta,b}$ is valid and the system is guaranteed to be incrementally input-to-state stable.

The successful runs of the algorithm have an average convergence time of $15$ minutes, and training data generation takes an additional time of $2$ seconds.

One can see from Figure \ref{fig:sim2} and \ref{fig:sim3} that the trajectories starting from different initial conditions are convergent towards each other, and the difference between the trajectories in steady state is because of the difference in input signals. Hence, the system is incrementally ISS.



\begin{figure}[h]
    \centering
    \includegraphics[width=\linewidth]{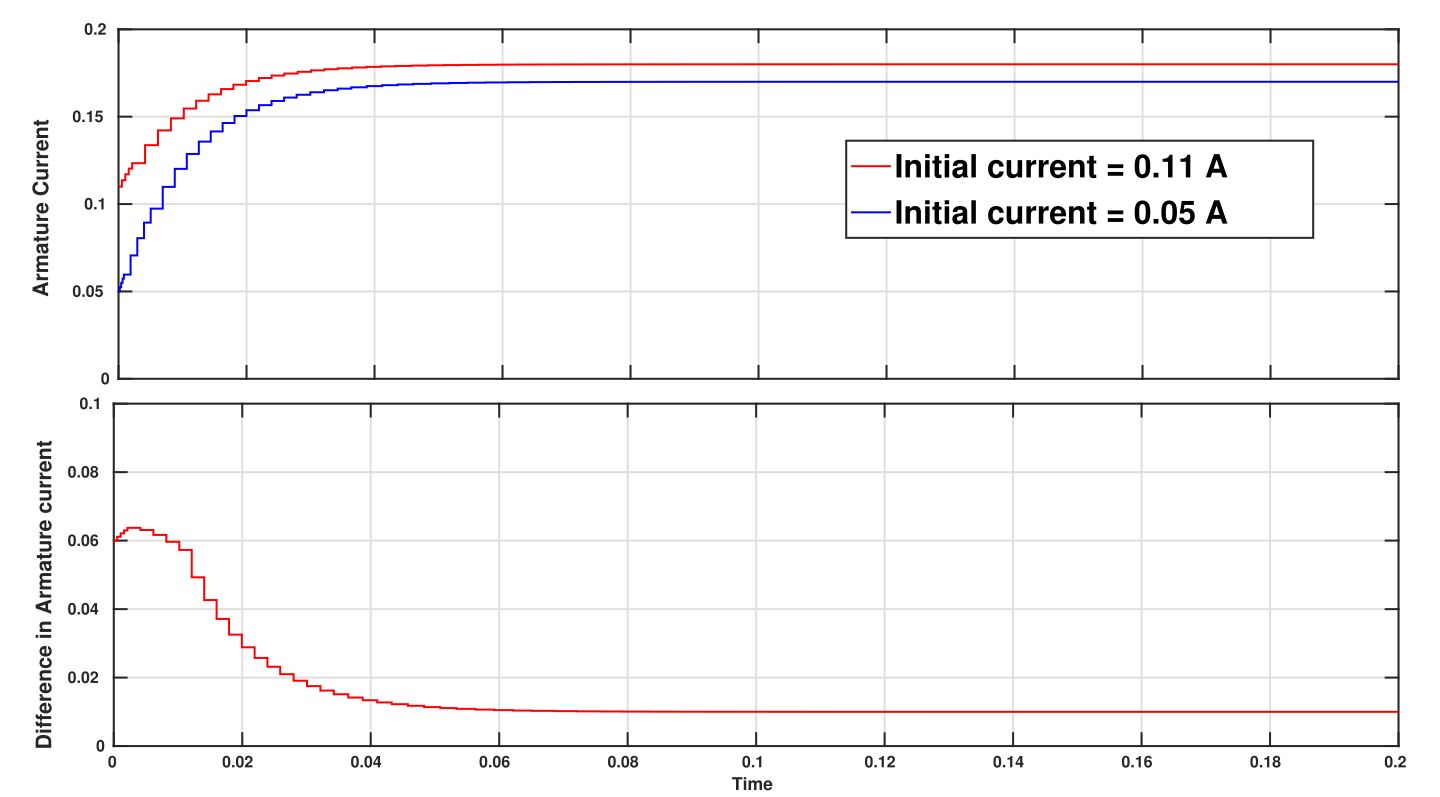}
    \caption{Top: Armature Current for DC motor, where the blue curve is influenced under input Vin = 0.17, and the red curve is influenced under input Vin = 0.18. Bottom: The difference in armature currents subjected to different initial conditions and input voltages.}
    \label{fig:sim2}
\end{figure}

\begin{figure}[h]
    \centering
    \includegraphics[width=\linewidth]{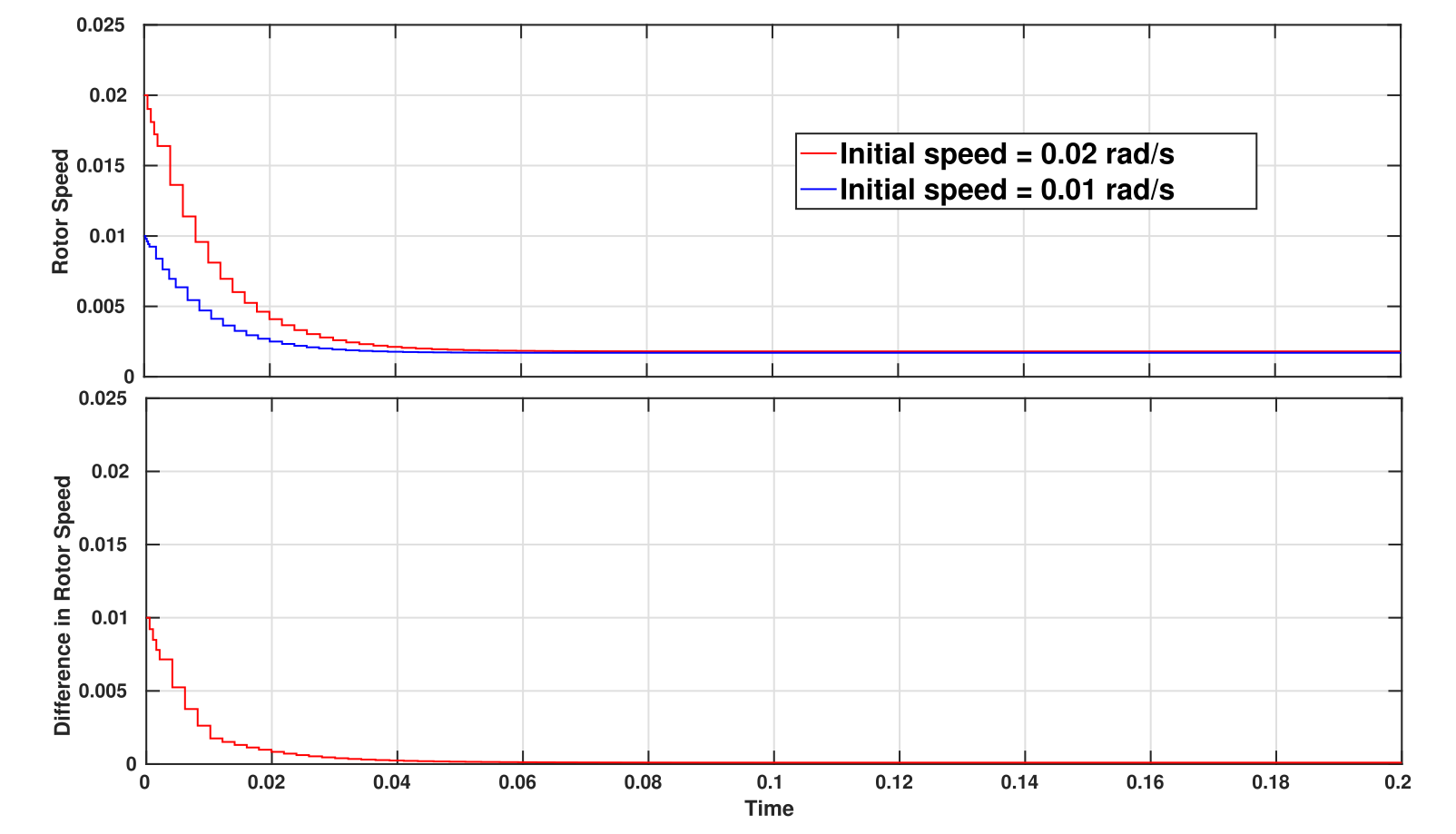}
    \caption{Top: The rotor speed for DC motor, where the blue curve is influenced under input Vin = 0.17, and the red curve is influenced under input Vin = 0.18. Bottom: The difference in rotor speeds subjected to different initial conditions and input voltages.}
    \label{fig:sim3}
\end{figure}
\section{Conclusion and Future work}
In this paper, we have successfully demonstrated how to synthesize a formally verified neural network that acts as an incremental input-to-state stable Lyapunov function for an unknown dynamical system. Here, we formulate the constraints of $\delta$-ISS Lyapunov function into an RCP, collect data, and formulate SCP. Next, we provide a validity condition that guarantees the successful solution of the SCP is valid for RCP as well. Hence the obtained neural network becomes a valid $\delta$-ISS Lyapunov function. Then, we utilize the validity condition and propose the training framework to synthesize provably correct Lyapunov function and formally guarantee its validity by adjusting suitable loss functions. The work can be extended into constructing $\delta$-ISS Lyapunov functions for continuous-time counterparts. One possible future direction of the work is to design a neural network-based controller to ensure a system is incrementally input-to-state stable. 

\bibliographystyle{ieeetr} 
\bibliography{sources} 
\end{document}